\providecommand{\algorithmname}{Algorithm}
  \theoremstyle{definition}
  \newtheorem{defn}{\protect\definitionname}
  \theoremstyle{plain}
  \newtheorem{prop}{\protect\propositionname}
  \providecommand{\definitionname}{Definition}
  \providecommand{\propositionname}{Proposition}
\begin{document}

\title{Low-complexity dominance-based Sphere Decoder for MIMO Systems%
\thanks{The material in this paper was presented in part at the Sixth International
Symposium on Wireless Communication Systems 2010 (ISWCS'10), University
of York, York, UK. September, 19--22, 2010.%
}}

\author{Gianmarco Romano, Domenico Ciuonzo, \\
Pierluigi Salvo~Rossi, Francesco Palmieri%
\thanks{The authors are with the Department of Industrial and Information
Engineering, Second University of Naples, via Roma, 29, 81031 Aversa
(CE), Italy. Email: \texttt{\{gianmarco.romano, domenico.ciuonzo,
pierluigi.salvorossi, francesco.palmieri\}@unina2.it}.%
}}

\date{{}}
\maketitle
\begin{abstract}
The sphere decoder (SD) is an attractive low-complexity alternative
to maximum likelihood (ML) detection in a variety of communication
systems. It is also employed in multiple-input multiple-output (MIMO)
systems where the computational complexity of the optimum detector
grows exponentially with the number of transmit antennas. We propose
an enhanced version of the SD based on an additional cost function
derived from conditions on worst case interference, that we call dominance
conditions. The proposed detector, the king sphere decoder (KSD),
has a computational complexity that results to be not larger than
the complexity of the sphere decoder and numerical simulations show
that the complexity reduction is usually quite significant.
\end{abstract}

\section{Introduction}

Currently, system design for wireless communications assumes the presence
of multiple antennas at both transmit and receive locations in order
to meet the requirements for high data rate transmission \cite{Biglieri2007}.
The main reason is found in the equivalent multiple-input multiple-output
(MIMO) channel providing diversity and/or capacity gains to the system,
where in the last case, compared to single-antenna systems, capacity
is increased by a factor equal to the minimum number of transmit and
receive antennas. 

The problem of (optimal) maximum-likelihood (ML) decoding in MIMO
systems is known to be exponentially complex in the number of transmit
antennas \cite{Verdu1998,Proakis2000}. Various suboptimal algorithms
have been developed as low-complexity alternatives to ML decoding,
e.g. branch and bound techniques \cite{Luo2004}, lattice-based approaches
\cite{Mow2003} and other tree-search algorithms as the A{*} algorithm
\cite{Ekroot1996}. A comprehensive study highlighting the connections
among various approaches for low-complexity ML decoding in wireless
communications is found in \cite{Murugan2006}.

In the framework of communication and information theory, the term
s\emph{phere decoder }(SD) usually refers to a collection of extremely
efficient algorithms based on number-theoretic tools, providing optimal
or nearly-optimal solutions with reduced average computational complexity
with respect to the exhaustive search of standard ML decoding. Inspired
from the work on vector search in lattices \cite{Fincke1985,Schnorr1994},
various SD algorithms have been proposed, e.g. for ML sequence estimation
in channels with memory\cite{Mow1994} and ML decoding for multidimensional
modulations in fading channels \cite{Viterbo1999}. SD has been then
extended in the context of multiantenna systems, both for uncoded
and space-time coded transmissions\cite{Damen2000}. Description and
performance comparison of different methods for SD-based ML decoding
are found in \cite{Agrell2002,Damen2003}: both works conclude that
Schnorr--Euchner-based SD (SESD) outperforms other SD variants. Furthermore,
the limitation of the algorithm to underloaded scenarios, i.e. with
number of transmit antennas not exceeding the number of receive antennas,
has been tackled in successive works dealing with optimal decoding
in (underdetermined) overloaded systems \cite{Cui2005,Chang2007,Wong2007}.
It is worth noticing that some works showed that the expected complexity
of SD is polynomial for a wide range of number of antennas and signal-to-noise
ratio (SNR) values \cite{Hassibi2005,Vikalo2005}, however according
to a more rigorous definition of expected complexity other works state
that SD exhibits reduced (w.r.t. ML) exponential complexity \cite{Jalden2005}.
Other SD algorithms approaching near-ML performance and suitable for
implementation with very large scale integration (VLSI) architectures
have been proposed in \cite{Guo2006}.

A different approach for ML decoding, based on dominance conditions,
has been studied in \cite{Odling2000,Axehill2008,Romano2009} for
systems adopting BPSK or QPSK modulation, and then extended in \cite{Romano2010}
to arbitrary-size PSK modulation. Such an algorithm, namely \emph{king
decoder }(KD), provides the ML solution and thus it is optimal from
the point of view of Symbol Error Rate (SER) performance. Two major
advantages are: (i) no matrix inversion and/or factorization is needed;
(ii) the same algorithm applies to both underloaded and overloaded
systems.

The main contribution of this paper\textcolor{red}{{} }is an enhanced
version of SD, which is based on an additional cost function derived
from dominance conditions, thus exploiting the properties of KD. The
new algorithm presents a significantly reduced computational complexity,
measured as the average number of visited nodes, w.r.t. the classic
SD. 

The rest of the paper is organized as follows: in Section~\ref{sec:system-model}
we present the mathematical model for the system under investigation;
Section~\ref{sec:sphere-decoder} describes the SD; dominance conditions,
representing the core of the improving innovation, are analytically
studied in Section~\ref{sec:dom-cond}; the proposed KSD for MIMO
detection is described in Section~\ref{sec:KSD}; in Section~\ref{sec:sim-results}
we show and compare the performance in terms of computational complexity
obtained via numerical simulations; finally, concluding remarks are
given in Section~\ref{sec:conclusions}.

\emph{Notation} - Lower-case bold letters denote vectors, with $a_{n}$
denoting the $n$th entry of $\mathbf{a}$; upper-case bold letters
denote matrices, with $a_{n,m}$ and $\mathbf{a}_{m}$ denoting the
$\left(n,m\right)$th entry and the $m$th column of $\mathbf{A}$,
respectively; $\mathbb{E}\left\{ \cdot\right\} $, $\left(\cdot\right)^{*}$,
$\left(\cdot\right)^{T}$, $\left(\cdot\right)^{H}$, and $\left\Vert \cdot\right\Vert _{2}$,
denote expectation, conjugate, transpose, conjugate-transpose and
squared Frobenius norm operators, respectively.

\section{System Model}

\label{sec:system-model}

We consider a narrowband MIMO system with $K$ transmit antennas and
$N$ receive antennas, described by the following vector model 
\begin{equation}
\mathbf{y}=\mathbf{H}\mathbf{x}+\mathbf{n},\label{eq:mimo-system-model}
\end{equation}
where $\mathbf{y}\in\mathbb{C}^{N}$ is the received vector, whose
entry $y_{i}$ represents the signal received by the $i$th receive
antenna; $\mathbf{H}\in\mathbb{C}^{N\times K}$ is the channel matrix,
whose entry $h_{ij}$ represents the fading coefficient between the
$j$th transmit antenna and the $i$th receive antenna; $\mathbf{x}\in\mathbb{C}^{K}$
is the transmitted vector, whose entry $x_{j}$ represents the symbol
transmitted by the $j$th transmit antenna; $\mathbf{n}\in\mathbb{C}^{N}$
is the additive noise vector modeled according to a zero-mean complex
Gaussian distribution with variance $\mathbb{E}\left\{ \mathbf{n}\mathbf{n}^{H}\right\} =\eta_{0}\mathbf{I}_{N}$.
Transmitted symbols are drawn from a finite set of complex symbols
$\chi$ which depends on the specific chosen modulation scheme. The
channel vector from the $k$th transmit antenna is $\mathbf{h}_{k}$,
i.e. the $k$th column of channel matrix. Also, we assume perfect
channel state information at the receiver. 

The problem of optimal decoding $\mathbf{x}$ from the knowledge of
$\mathbf{y}$ is formulated as follows 
\begin{equation}
\mathbf{x}_{ML}=\arg\min_{\mathbf{x}\in\chi^{K}}\left\Vert \mathbf{y}-\mathbf{H}\mathbf{x}\right\Vert _{2}\label{eq:ml}
\end{equation}
where exhaustive search is apparently prohibitive for sizes of interest,
thus the need for low-complexity alternatives. Assuming the constraint
that the total average energy to be transmitted over the single symbol
period cannot exceed $E_{x}$, system performance are evaluated with
respect to the SNR per single receive antenna, i.e. ${\rm SNR}\triangleq E_{x}/\eta_{0}$.

It is worth noticing that other kinds of systems for multiuser communications,
such as direct-sequence code-division-multiple-access (DS-CDMA) \cite{Verdu1998}
and multi-carrier code-division-multiple-access (MC-CDMA) \cite{Hanzo2006},
share the same linear model with additive noise described by \eqref{eq:mimo-system-model}.

\section{Sphere Decoder}

\label{sec:sphere-decoder}

The idea of sphere decoding is to restrict the search to transmitted
vectors whose received constellation counterparts are included in
a hyper-sphere with radius $r$ centered on the received signal $\mathbf{y}$,
that is 
\begin{equation}
\left\Vert \mathbf{y}-\mathbf{H}\mathbf{x}\right\Vert ^{2}<r^{2}.\label{eq:sphere_decoding_condition}
\end{equation}
If the sphere contains no vectors the algorithm either fails or restarts
with an increased radius. In the latter case the result of the algorithm
is always the optimal ML solution, obtained with reduced computational
complexity when the number of vectors in the sphere is small compared
to the overall number of possible transmitted vectors, i.e. $\mathcal{\left|\chi\right|}^{K}$.
The choice of the radius is crucial in order to obtain a computational
complexity gain; in the ideal case the sphere should include just
one vector.

The test in \eqref{eq:sphere_decoding_condition} is efficiently performed
by exploiting the $\mathbf{QL}$ (corresp. $\mathbf{QR}$) factorization
of the channel matrix $\mathbf{H}$ in terms of a unitary matrix $\mathbf{Q}$
(i.e. $\mathbf{Q}^{H}\mathbf{Q=\mathbf{I}}_{N})$ and a lower-triangular
matrix $\mathbf{L}$ (corresp. upper-triangular matrix $\mathbf{R}$).
In this case \eqref{eq:ml} can be equivalently formulated as 
\begin{align}
\mathbf{x}_{ML} & =\arg\min_{\mathbf{x}\in\chi^{K}}\left\Vert \tilde{\mathbf{y}}-\mathbf{L}\mathbf{x}\right\Vert ^{2}\label{eq:equiv-ml}\\
 & =\arg\min_{\mathbf{x}\in\chi^{K}}\sum_{i=1}^{K}\left|\tilde{y}_{i}-\sum_{j=1}^{i}l_{ij}x_{j}\right|^{2},\label{eq:ed}
\end{align}
where $\tilde{\mathbf{y}}\triangleq\mathbf{Q}^{T}\mathbf{y}$. The
QR factorization enables the test in \eqref{eq:sphere_decoding_condition}
to be formulated as a tree search with pruning\cite{Murugan2006}.
In fact the summation in \eqref{eq:ed} can be performed on a tree
with $K+1$ layers where the term 
\begin{equation}
\left|\tilde{y}_{i}-\sum_{j=1}^{i}l_{ij}x_{j}\right|^{2},\label{eq:pd}
\end{equation}
can be computed at each node of the layer $i$. The advantage of this
formulation is that the partial distance in \eqref{eq:pd} is always
positive; this fact implies that the children nodes have always greater
partial distances, i.e. the metric is said to be \emph{cumulative}.
Therefore at each node at layer $i$ we can compute the accumulated
partial distance 
\begin{equation}
\sum_{k=1}^{i}\left|\tilde{y}_{i}-\sum_{j=1}^{i}l_{ij}x_{j}\right|^{2},\label{eq:apd}
\end{equation}
and compare it with a threshold, corresponding to $r^{2}$. The algorithm
selects only the nodes leading to leaves that are within a sphere
and at the same time computes the metric that will be used at the
end to select the optimal solution. As stated before, if no leaves
are contained in the sphere then the radius is increased and the search
on the tree is restarted.

There are two possible strategies to perform the tree search: the
breadth-first search (BFS) and the depth-first search (DFS) \foreignlanguage{american}{\cite{Murugan2006}}.
In the breadth-first search, all surviving nodes of the same level
are visited before moving to the next level, until the leaves are
reached. In the depth-first, at each level only one node is visited,
and following its child in $K$ steps a leaf is reached. At this point
the radius is updated and the algorithm proceeds with other nodes
starting from upper levels. While in BFS the tree is traversed from
top to bottom, in DFS the tree is traversed horizontally. In the latter
case the algorithm can be started with an infinite radius as it can
be updated as soon as the first leave is reached. The performance
of the SD algorithm can be improved by choosing a proper enumeration
order. In Fincke-Pohst enumeration \cite{Fincke1985} branches are
enumerated in a natural fashion, while in Schnorr-Euchner enumeration
\cite{Schnorr1994,Agrell2002} branches are selected in a \emph{zig-zag}
fashion for QAM constellations along each dimension \cite{Damen2003}.

The computational complexity of the sphere decoding algorithm is measured
by the average number of visited nodes needed to obtain (\ref{eq:equiv-ml})
\cite{Jalden2005}. That figure is closely related to the time required
by the algorithm to provide the solution and clearly related to the
throughput that is achievable in currently available digital hardware
\cite{Studer2008}. The computational cost of $\mathbf{QL}$ factorization
is not considered here, since it is computed once for all and it represents
a negligible factor in the overall complexity.

\section{Dominance Conditions}

\label{sec:dom-cond}

In the sphere decoding algorithm at each node the partial distance
is checked in order to exclude some branches in the tree. Another
condition can be derived from the Euclidean distance that can improve
the computational complexity of the sphere decoder. In this section
we derive a set of sufficient conditions that can be used to exclude
some possible transmitted vectors from the set of candidates in the
ML search. 

Geometrically the ML solution is given by the vector $\mathbf{x}$
that minimizes the Euclidean distance 
\begin{equation}
f\left(\mathbf{x}\right)=\left(\mathbf{y}-\mathbf{H}\mathbf{x}\right)^{H}\left(\mathbf{y}-\mathbf{H}\mathbf{x}\right).\label{eq:euclidean-distance}
\end{equation}
We first define the difference of the Euclidean distance between two
generic points of $\chi^{K}$. 
\begin{defn}
\label{def:discrete-difference}Given two generic vectors $\mathbf{x}$
and $\hat{\mathbf{x}}$, with $\{\mathbf{x},\hat{\mathbf{x}}\}\in\chi^{K}$,
the \emph{discrete difference} is defined as $\Delta f\left(\mathbf{x};\hat{\mathbf{x}}\right)\triangleq f\left(\mathbf{x}\right)-f\left(\hat{\mathbf{x}}\right)$
. 
\end{defn}
\begin{defn}
The discrete difference related to vectors differing only in the $k$th
component is called \emph{$k$th discrete difference} along the $k$th
coordinate and denoted $\Delta_{k}f\left(\mathbf{x};\hat{\mathbf{x}}\right)$. 
\end{defn}
A necessary and sufficient condition for $\mathbf{x}$ to be a global
minimum for the cost function $f\left(\mathbf{x}\right)$ is then
that all discrete differences $\Delta f\left(\mathbf{x};\hat{\mathbf{x}}\right)$
are non positive for each $\hat{\mathbf{x}}\in\chi^{K}$. The search
of the global minimum just by looking at the differences does not
reduce the computational complexity of the ML search alone. The number
of differences to compute is still exponential with the number of
inputs and the size of the constellation. However, as it will be clearer
in the following, we can avoid to look at all differences and still
get the optimal solution.

In the special case of the Euclidean distance the discrete difference
along the generic $k$th coordinate takes on a specific expression,
as stated by the following proposition. 
\begin{prop}
\label{pro:k-ddiff}For any pair of vectors $\mathbf{x}$ and $\hat{\mathbf{x}}$
that belong to $\chi^{K}$ and differ only in the $k$th position
\begin{multline}
\Delta_{k}f\left(\mathbf{x};\hat{\mathbf{x}}\right)=-2\Re\left\{ \left(x_{k}-\hat{x}_{k}\right)^{*}\left[\mathbf{h}_{k}^{H}\mathbf{y}-\sum_{i\neq k}x_{i}\mathbf{h}_{k}^{H}\mathbf{h}_{i}\right]\right\} \\
+\left(\left|x_{k}\right|^{2}-\left|\hat{x}_{k}\right|^{2}\right)\mathbf{h}_{k}^{H}\mathbf{h}_{k}.\label{eq:ddiff}
\end{multline}
 \end{prop}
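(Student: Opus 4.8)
The plan is to establish the identity by a direct expansion of the Hermitian form in \eqref{eq:euclidean-distance}. First I would expand
\begin{equation}
f(\mathbf{x}) = \mathbf{y}^H\mathbf{y} - 2\Re\left\{ \sum_{j=1}^{K} x_j^{*}\,\mathbf{h}_j^H\mathbf{y}\right\} + \sum_{i=1}^{K}\sum_{j=1}^{K} x_i^{*}x_j\,\mathbf{h}_i^H\mathbf{h}_j,
\end{equation}
separating the constant term, the linear (real-part) term, and the quadratic term. Here the two cross terms $-\mathbf{y}^H\mathbf{H}\mathbf{x}$ and $-\mathbf{x}^H\mathbf{H}^H\mathbf{y}$ are complex conjugates of one another, so they combine into a single real part, and I have written $\mathbf{H}\mathbf{x}=\sum_j x_j\mathbf{h}_j$ to expose the individual columns $\mathbf{h}_j$.

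Next I would substitute this expression into $\Delta_k f(\mathbf{x};\hat{\mathbf{x}})=f(\mathbf{x})-f(\hat{\mathbf{x}})$ and invoke the hypothesis $x_i=\hat{x}_i$ for every $i\neq k$. The term $\mathbf{y}^H\mathbf{y}$ cancels. In the linear part every summand with $j\neq k$ cancels, leaving only $-2\Re\{(x_k-\hat{x}_k)^{*}\mathbf{h}_k^H\mathbf{y}\}$. In the double sum, an entry survives the subtraction only if $i=k$ or $j=k$; I would therefore split the surviving contribution into the diagonal entry $(|x_k|^2-|\hat{x}_k|^2)\,\mathbf{h}_k^H\mathbf{h}_k$ and two families of off-diagonal cross terms, one arising from $i=k$ and indexed by $j\neq k$, the other arising from $j=k$ and indexed by $i\neq k$.

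Finally I would combine the two off-diagonal families. Using the Hermitian symmetry $\mathbf{h}_i^H\mathbf{h}_k=(\mathbf{h}_k^H\mathbf{h}_i)^{*}$ together with $x_i=\hat{x}_i$ for $i\neq k$, the $i=k$ family reduces to $(x_k-\hat{x}_k)^{*}\sum_{i\neq k}x_i\,\mathbf{h}_k^H\mathbf{h}_i$, while the $j=k$ family is exactly its complex conjugate; added together they yield $2\Re\{(x_k-\hat{x}_k)^{*}\sum_{i\neq k}x_i\,\mathbf{h}_k^H\mathbf{h}_i\}$. Merging this with the surviving linear term collects $\mathbf{h}_k^H\mathbf{y}-\sum_{i\neq k}x_i\,\mathbf{h}_k^H\mathbf{h}_i$ inside a single factor $-2\Re\{(x_k-\hat{x}_k)^{*}[\,\cdot\,]\}$, and reinstating the diagonal quadratic term produces exactly \eqref{eq:ddiff}. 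The computation is elementary, so the only delicate point is the bookkeeping of the conjugates: one must recognize that the two off-diagonal families are conjugate to each other so that they collapse into a real part, and keep the quadratic diagonal in the form $|x_k|^2-|\hat{x}_k|^2$. This last form is precisely what expanding $f$ term by term delivers, whereas prematurely forming the difference $\mathbf{H}(\mathbf{x}-\hat{\mathbf{x}})=(x_k-\hat{x}_k)\mathbf{h}_k$ would instead produce $|x_k-\hat{x}_k|^2$ and obscure the stated identity.
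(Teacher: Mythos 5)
Your proof is correct and follows essentially the same route as the paper: direct expansion of $f$ into constant, linear, and quadratic parts, cancellation of all terms not involving index $k$, and use of the Hermitian symmetry $\mathbf{h}_i^H\mathbf{h}_k=(\mathbf{h}_k^H\mathbf{h}_i)^*$ to collapse the two off-diagonal families of the double sum into a single real part. If anything, your bookkeeping is stated more carefully than the paper's (whose intermediate identity for $\mathbf{x}^H\mathbf{H}^H\mathbf{H}\mathbf{x}-\hat{\mathbf{x}}^H\mathbf{H}^H\mathbf{H}\hat{\mathbf{x}}$ contains a typographical slip, with $\mathbf{h}_k^H\mathbf{y}$ appearing where it cannot), but the underlying argument is identical.
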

\begin{proof}
See appendix \ref{proof:k-ddiff}.
\end{proof}
The $k$th discrete difference in \eqref{eq:ddiff} depends on the
observed vector $\mathbf{y}$ and on the symbols of the other elements
of the input vector $\mathbf{x}$, i.e. $x_{i}$, $i\neq k$. The
sign of the discrete difference $\Delta_{k}f\left(\mathbf{x};\hat{\mathbf{x}}\right)$
determines which of the two possible transmit vectors $\mathbf{x}$
and $\hat{\mathbf{x}}$ is closer to the observation $\mathbf{y}$. 

The discrete difference expression in \eqref{eq:ddiff} can be simplified
if a constellation with constant modulus is employed, as the second
term on the right hand side of \eqref{eq:ddiff} becomes zero. In
this case the discrete difference reduces to 
\begin{multline}
\Delta_{k}f\left(\mathbf{x};\hat{\mathbf{x}}\right)=\\
-2\Re\left\{ \left(x_{k}-\hat{x}_{k}\right)^{*}\left[\mathbf{h}_{k}^{H}\mathbf{y}-\sum_{i\neq k}x_{i}\mathbf{h}_{k}^{H}\mathbf{h}_{i}\right]\right\} .\label{eq:k-ddiff-psk}
\end{multline}

\subsection{Dominance conditions for $4$-QAM}

Since $4$-QAM constellations are separable, we can equivalently consider
a real-valued system model, whose dimensions are doubled, with binary
signaling, i.e. $\chi=\left\{ -1,+1\right\} $. In the following,
theoretical results will be derived referring to the real-valued system
model. In this case, the $k$th discrete difference is
\begin{equation}
\Delta_{k}f\left(\mathbf{x};\hat{\mathbf{x}}\right)=-2\left(x_{k}-\hat{x}_{k}\right)\left[\mathbf{h}_{k}^{T}\mathbf{y}-\sum_{i\neq k}x_{i}\mathbf{h}_{k}^{T}\mathbf{h}_{i}\right].\label{eq:k-ddiff-bpsk}
\end{equation}
Eq.~ \eqref{eq:k-ddiff-bpsk} can be used to make an optimal decision
under the assumption that the contribution due to the other components
of vector $\mathbf{x}$ are known. From \eqref{eq:k-ddiff-bpsk},
a necessary condition can be derived for BPSK constellations as follows.
The discrete difference is non positive when the two terms on the
right-hand side of \eqref{eq:k-ddiff-bpsk} have the same sign
\begin{equation}
\mathrm{sign}\left(x_{k}-\hat{x}_{k}\right)=\mathrm{sign}\left[\mathbf{h}_{k}^{T}\mathbf{y}-\sum_{i\neq k}x_{i}\mathbf{h}_{k}^{T}\mathbf{h}_{i}\right].
\end{equation}
Note however that in the binary case there exists only one adjacent
point, i.e. $\hat{x}_{k}=-x_{k}$, and the above equation can be written
as
\begin{equation}
\mathrm{sign}\left(2x_{k}\right)=\mathrm{sign}\left[\mathbf{h}_{k}^{T}\mathbf{y}-\sum_{i\neq k}x_{i}\mathbf{h}_{k}^{T}\mathbf{h}_{i}\right],
\end{equation}
that can be equivalently rewritten as
\begin{equation}
x_{k}=\mathrm{sign}\left[\mathbf{h}_{k}^{T}\mathbf{y}-\sum_{i\neq k}x_{i}\mathbf{h}_{k}^{T}\mathbf{h}_{i}\right].\label{eq:equil-condition}
\end{equation}
From \eqref{eq:equil-condition} we have that the ML solution must
satisfy the set of equations
\begin{equation}
x_{k}^{ML}=\mathrm{sign}\left[\mathbf{h}_{k}^{T}\mathbf{y}-\sum_{i\neq k}x_{i}^{ML}\mathbf{h}_{k}^{T}\mathbf{h}_{i}\right],\quad k=1,\dots,K\label{eq:necessary-condition}
\end{equation}
which provide the set of local minima of the Euclidean distance, thus
representing a necessary condition for the ML solution, as for these
points all the $k$th discrete differences are non positive. 

It is interesting to note that the same set of equations has been
derived in the context of Hopfield neural network (HNN) \cite{Hopfield1982,Hopfield1984}
and applied to ML decoding. In \cite{Miyajima1993,Kechriotis1993},
detectors for code division multiple access (CDMA) have been proposed
for the first time and then the idea has been further developed in
\cite{Kechriotis1996,Kechriotis1996a,Sgraja2001,Engelhart2002}. The
Eq.~\eqref{eq:equil-condition} represents the discrete-time approximation
of the equation of motion of neurons, as the metric of ML optimum
detector can be mapped to the energy function of the HNN and the ML
solution is the result of the dynamic update of \eqref{eq:equil-condition}
(see for example \cite{Engelhart2002} and references therein). Therefore
the search is based on a gradient descent algorithm that may not provide
the exact ML solution, but rather only a local minimum. Furthermore
when the updates of the discrete-time equations are done in parallel,
the solution may also present limit cycles and no convergence to a
fixed point \cite{Marcus1989}. In order to prevent the updating rule
to enter a limit cycle and to force the dynamic update through increasing
likelihood towards the global minimum, in \cite{Sun2009} a modified
HNN approach to ML decoding is proposed, leading to a family of likelihood
ascent sub-optimal detectors (LAS). All these algorithms are sub-optimal
and can approach optimal performances only under specific conditions.

The necessary conditions in \eqref{eq:necessary-condition} suggest
to restrict the search for the ML solution to the set of local minima.
Unfortunately, no method is known to enumerate all equilibrium points,
i.e. points that satisfy \eqref{eq:necessary-condition} with a computational
complexity that it is not exponential. However, we can still identify
cases where the determination of the sign of the $k$th discrete difference,
i.e. the determination of the $k$th component of local minima, can
be made regardless of the contribution of all other components of
$\mathbf{x}$. A sufficient condition for the determination of the
sign of the $k$th discrete difference is given by the following proposition.
\begin{prop}
\label{pro:suff-cond}If the following condition is satisfied 
\begin{equation}
\left|\mathbf{h}_{k}^{T}\mathbf{y}\right|>\sum_{i\neq k}\left|\mathbf{h}_{k}^{T}\mathbf{h}_{i}\right|\label{eq:suff-cond}
\end{equation}
then the sign of the corresponding $k$th discrete difference for
BPSK constellation is determined regardless of the contribution of
all other components of $\mathbf{x}$. \end{prop}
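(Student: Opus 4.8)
The plan is to reduce the statement to a single worst-case estimate on the interference term. Recall from \eqref{eq:k-ddiff-bpsk} that, for binary signaling, the only point adjacent to $\mathbf{x}$ along the $k$th coordinate is the one with $\hat{x}_{k}=-x_{k}$, so that $x_{k}-\hat{x}_{k}=2x_{k}$ and
\begin{equation}
\Delta_{k}f\left(\mathbf{x};\hat{\mathbf{x}}\right)=-4\,x_{k}\left[\mathbf{h}_{k}^{T}\mathbf{y}-\sum_{i\neq k}x_{i}\,\mathbf{h}_{k}^{T}\mathbf{h}_{i}\right].
\end{equation}
Because $x_{k}\in\left\{ -1,+1\right\} $ is a known, unit-magnitude factor, pinning down the sign of $\Delta_{k}f$ independently of the remaining symbols is equivalent to showing that the bracketed quantity $g\triangleq\mathbf{h}_{k}^{T}\mathbf{y}-\sum_{i\neq k}x_{i}\,\mathbf{h}_{k}^{T}\mathbf{h}_{i}$ keeps a fixed sign as the interfering symbols $x_{i}$, $i\neq k$, range over $\left\{ -1,+1\right\} $.

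First I would bound the interference. Using $|x_{i}|=1$ for every $i$ together with the triangle inequality,
\begin{equation}
\left|\sum_{i\neq k}x_{i}\,\mathbf{h}_{k}^{T}\mathbf{h}_{i}\right|\leq\sum_{i\neq k}\left|\mathbf{h}_{k}^{T}\mathbf{h}_{i}\right|,
\end{equation}
which is exactly the right-hand side of the hypothesis \eqref{eq:suff-cond}; this is the worst-case interference over all admissible sign patterns, attained when each $x_{i}$ aligns with the sign of $\mathbf{h}_{k}^{T}\mathbf{h}_{i}$.

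Next I would split into the two cases determined by the sign of $\mathbf{h}_{k}^{T}\mathbf{y}$. If $\mathbf{h}_{k}^{T}\mathbf{y}>0$, then
\begin{equation}
g\geq\mathbf{h}_{k}^{T}\mathbf{y}-\left|\sum_{i\neq k}x_{i}\,\mathbf{h}_{k}^{T}\mathbf{h}_{i}\right|\geq\mathbf{h}_{k}^{T}\mathbf{y}-\sum_{i\neq k}\left|\mathbf{h}_{k}^{T}\mathbf{h}_{i}\right|>0,
\end{equation}
the final strict inequality being precisely \eqref{eq:suff-cond}; hence $g>0$ for every interference pattern, so $\mathrm{sign}\,g=\mathrm{sign}\left(\mathbf{h}_{k}^{T}\mathbf{y}\right)$ independently of the $x_{i}$. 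The case $\mathbf{h}_{k}^{T}\mathbf{y}<0$ is symmetric and yields $g<0$ throughout. In both cases the sign of the bracket, and therefore the sign of $\Delta_{k}f$ through the known factor $-4x_{k}$, is fixed regardless of all other components of $\mathbf{x}$, which is the claim.

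The argument is in essence a one-line triangle-inequality estimate, so I do not anticipate a genuine obstacle. The only points needing care are the use of $|x_{i}|=1$, which is what makes the worst-case bound both clean and tight and is specific to constant-modulus (here binary) signaling, and the strict inequality in \eqref{eq:suff-cond}, which is what guarantees a definite nonzero sign rather than a merely non-strict conclusion.
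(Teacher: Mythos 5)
Your proof is correct and follows essentially the same route as the paper: the paper's appendix argues that under \eqref{eq:suff-cond} the term $\mathbf{h}_{k}^{T}\mathbf{y}$ dominates the interference sum in \eqref{eq:equil-condition} independently of the $x_{i}$, $i\neq k$, so the sign reduces to $\mathrm{sign}\left[\mathbf{h}_{k}^{T}\mathbf{y}\right]$, which is exactly your worst-case triangle-inequality estimate and sign case split made explicit. Your write-up is in fact more detailed than the paper's, which merely asserts the dominance without spelling out the bound.
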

\begin{proof}
See appendix \ref{proof:suff-cond}.
\end{proof}
Eq.~\eqref{eq:suff-cond} is a \emph{dominance condition} because,
when it holds, the $k$th component of the projected received vector
is so strong that dominates all other components. The dominance condition
assumes that in \eqref{eq:suff-cond} no symbols $x_{i}$, $i\neq k$,
are known. However, in sequential decoding, partial knowledge may
be available. In such cases the sign of the discrete difference depends
only on the subset of $x_{i}$ that are still to be decoded. A dominance
condition when only a subset $\mathcal{W}$ of symbols is already
available, can be given.
\begin{prop}
\label{pro:cond-suff-cond}Given the set of known symbols $\mathcal{W}$
and a set of unknown symbols $\mathcal{O}$, if the following condition
holds 
\begin{equation}
\left|\mathbf{h}_{k}^{T}\mathbf{y}-\sum_{m\in\mathcal{W},m\neq k}x_{m}\mathbf{h}_{k}^{T}\mathbf{h}_{m}\right|>\sum_{i\in\mathcal{O},i\neq k}\left|\mathbf{h}_{i}^{T}\mathbf{h}_{k}\right|,\label{eq:cond-suff-cond}
\end{equation}
then the sign of the corresponding $k$th discrete difference for
BPSK constellation is determined regardless of the contribution of
all components of $\mathbf{x}$, $x_{i}$, $i\in\mathcal{O}$. \end{prop}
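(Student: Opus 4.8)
The starting point is the expression \eqref{eq:k-ddiff-bpsk} for the $k$th discrete difference, whose sign---and hence, through \eqref{eq:equil-condition}, the locally optimal choice of $x_k$---is governed entirely by the sign of the bracketed quantity $\mathbf{h}_k^{T}\mathbf{y}-\sum_{i\neq k}x_{i}\mathbf{h}_{k}^{T}\mathbf{h}_{i}$. The plan is to mirror the proof of Proposition \ref{pro:suff-cond}, but to first split this interference sum according to the partition of the remaining indices into the known set $\mathcal{W}$ and the unknown set $\mathcal{O}$. The key observation is that the contribution of the symbols in $\mathcal{W}$ is a \emph{fixed} scalar that can be absorbed into an effective matched-filter output, so that only the contribution of the still-undecided symbols in $\mathcal{O}$ needs to be controlled by a worst-case bound.

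Concretely, I would proceed as follows. Assuming the index $k$ under test lies in $\mathcal{O}$, rewrite the bracket as $A-\sum_{i\in\mathcal{O},i\neq k}x_{i}\mathbf{h}_{k}^{T}\mathbf{h}_{i}$, where $A\triangleq\mathbf{h}_{k}^{T}\mathbf{y}-\sum_{m\in\mathcal{W},m\neq k}x_{m}\mathbf{h}_{k}^{T}\mathbf{h}_{m}$ is determined once the symbols in $\mathcal{W}$ are fixed. Next, bound the residual interference by the triangle inequality: since $\left|x_{i}\right|=1$ for BPSK, one has $\left|\sum_{i\in\mathcal{O},i\neq k}x_{i}\mathbf{h}_{k}^{T}\mathbf{h}_{i}\right|\le\sum_{i\in\mathcal{O},i\neq k}\left|\mathbf{h}_{k}^{T}\mathbf{h}_{i}\right|$ for \emph{every} admissible assignment of the unknown symbols. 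Then invoke the hypothesis \eqref{eq:cond-suff-cond}, namely that $\left|A\right|$ strictly exceeds this bound: the residual interference can never overcome $A$, so the sign of the whole bracket coincides with the sign of $A$ irrespective of the values $x_{i}$, $i\in\mathcal{O}$. Finally, substituting back into \eqref{eq:k-ddiff-bpsk} shows that the sign of $\Delta_{k}f\left(\mathbf{x};\hat{\mathbf{x}}\right)$ is fixed regardless of the unknown components, which is exactly the claim.

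I expect no serious obstacle here: the argument is a direct conditional refinement of Proposition \ref{pro:suff-cond}, which is recovered as the special case $\mathcal{W}=\emptyset$, $\mathcal{O}=\{1,\dots,K\}$. The only points requiring care are purely bookkeeping ones---keeping the $\neq k$ restrictions consistent across both summations, and confirming that the tested index $k$ indeed belongs to $\mathcal{O}$, which is the relevant situation during sequential decoding. It is also worth stressing that the inequality \eqref{eq:cond-suff-cond} is \emph{strict}, so that even the worst-case interference leaves a nonzero margin and the sign is determined unambiguously.
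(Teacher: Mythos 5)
Your proposal is correct and follows essentially the same route as the paper, whose proof of Proposition~\ref{pro:cond-suff-cond} is simply stated as ``analogous to the proof of Prop.~\ref{pro:suff-cond}'': you absorb the known-symbol interference from $\mathcal{W}$ into an effective matched-filter term $A$ and bound the remaining interference from $\mathcal{O}$ by the worst case, exactly the dominance argument of Proposition~\ref{pro:suff-cond}. Your write-up is in fact slightly more explicit than the paper's (making the triangle-inequality step and the strictness of the inequality visible), but it is the same idea, not a different one.
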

\begin{proof}
Analogous to the proof of Prop. \ref{pro:suff-cond}.
\end{proof}
Eq.~\eqref{eq:cond-suff-cond} generalizes \eqref{eq:suff-cond}:
if some antenna $i$ is not dominant over his multiantenna interference,
it may happen that it is \emph{conditionally} dominant, as the interference
by the already known bits is canceled out.

The sufficient condition in \eqref{eq:suff-cond} was also derived
in \cite{Kechriotis1996a}, where it has been used in a multiuser
detection algorithm based on Hopfield Neural Networks. Eqs.~\eqref{eq:suff-cond}
and \eqref{eq:cond-suff-cond} have also been used in \cite{Odling2000}
for maximum-likelihood sequence detection and then\emph{ }in \cite{Axehill2008}
with a preprocessing algorithm for multiuser detection. In \cite{Romano2009}
they have been used for a stand-alone tree-search algorithm for low-complexity
ML detection in spatial multiplexing MIMO systems, the king decoder.

Eqs.~\eqref{eq:suff-cond} and \eqref{eq:cond-suff-cond} are satisfied
if the off-diagonal terms of the channel correlation matrix are small
compared to the terms $\left|\mathbf{h}_{k}^{T}\mathbf{y}\right|$,
$k=1,\dots,K$. Whether the conditions are satisfied or not depends
on the received vector $\mathbf{y}$ and on the structure of the channel
or of the correlation channel matrix.

\subsection{Dominance conditions for $M$-QAM}

In the case of $M$-QAM constellations, dominance conditions can be
expressed in terms of those for $4$-QAM, when $M=2^{n}$ and $n$
is an even number, e.g. $16$-QAM. In fact such QAM constellations
can be written as weighted linear combination of $n/2$ $4$-QAMs
\cite{Cui2005}. For example, the $16$-QAM transmit vector can be
expressed as 
\begin{equation}
\mathbf{x}=\mathbf{x}_{1}+2\mathbf{x}_{2}
\end{equation}
where $\mathbf{x}_{1},\mathbf{x}_{2}$ are $4$-QAM vectors. Consequently,
the system model \eqref{eq:mimo-system-model} can be written as
\begin{equation}
\mathbf{y}=\left[\begin{array}{cc}
\mathbf{H} & 2\mathbf{H}\end{array}\right]\left(\begin{array}{c}
\mathbf{x}_{1}\\
\mathbf{x}_{2}
\end{array}\right)+\mathbf{n}\label{eq:16QAM-MIMO-model}
\end{equation}
which represents the equivalent model for $16$-QAM MIMO systems with
$K$ transmit and $N$ receive antennas in terms of $4$-QAM MIMO
system with $2K$ transmit and $N$ receive antennas. Based on this
equivalence, we can restrict our attention to $4$-QAM MIMO systems
without loss of generality.

\subsection{Dominance conditions for $M$-PSK}

It is possible to derive analogous dominance conditions in the general
case of $M$-PSK, however in this case the real-valued system model
does not hold. Dominance conditions based on the complex-valued system
model have been derived and analyzed in \cite{Romano2010}. Results
are not reported here, as they are not necessary for popular systems
supporting QAM.

\section{King Sphere Decoder}

\label{sec:KSD}

The main contribution of this paper is the integration of the conditions
\eqref{eq:suff-cond} and \eqref{eq:cond-suff-cond} in \emph{any}
sphere decoding algorithm. The idea that we propose is to use the
conditional dominance condition given by \eqref{eq:cond-suff-cond}
at each node of the decoding tree in addition to the partial distance
condition of the standard sphere decoding algorithm. The dominance
conditions, when satisfied, allow to cut branches off that cannot
correspond to the optimal solution and then reduce the number of the
visited nodes, i.e. the computational complexity of the search. The
operation of the proposed algorithm is shown with the help of Fig.~\ref{fig:tree-decoding}
that shows a decoding tree for a system with $N=5$ and $M=5$ antennas.
At each node we can check whether \eqref{eq:cond-suff-cond} is satisfied
or not. For example the node pointed by the arrow corresponds to the
dominance condition
\begin{multline}
\left|\mathbf{h}_{3}^{T}\mathbf{y}-\mathbf{h}_{3}^{T}\mathbf{h}_{2}x_{2}-\mathbf{h}_{3}^{T}\mathbf{h}_{1}x_{1}\right|_{x_{1}=1,x_{2}=1}>\\
\left|\mathbf{h}_{3}^{T}\mathbf{h}_{4}\right|+\left|\mathbf{h}_{3}^{T}\mathbf{h}_{5}\right|.
\end{multline}
If the condition is satisfied then a decision on the corresponding
bit can be made and only one of the two branches that departs from
that node is selected, and half of child nodes can be cut off. In
our example such condition is satisfied and a decision on bit 3 can
be made: $x_{3}=-1$, if $x_{1}=1$ and $x_{2}=1$ or, equivalently,
we can exclude all the vectors that have $x_{1}=1$, $x_{2}=1$, $x_{3}=1$.
At the end we obtain a set of possible ML solutions, as shown in Fig.
\ref{fig:tree-decoding}, where only 6 out of 32 paths survive.

The tree-search algorithm that makes use of the (conditional) dominance
conditions alone has already been presented in \cite{Romano2009,Romano2010},
where it has been called king decoder (KD). In general at the end
of the tree-search the selection of the optimal solution is made among
the survivors by computing the corresponding metric and then the last
step of the search involves the computation of Euclidean distances
for all survivors. In KD rather than compute the Euclidean distance
at the end of the enumeration process, a different equivalent metric,
that is cumulative and re-uses the computations done for dominance
conditions, has been introduced \cite{Romano2009}. 

\begin{figure}
\begin{centering}
\includegraphics[width=0.9\columnwidth]{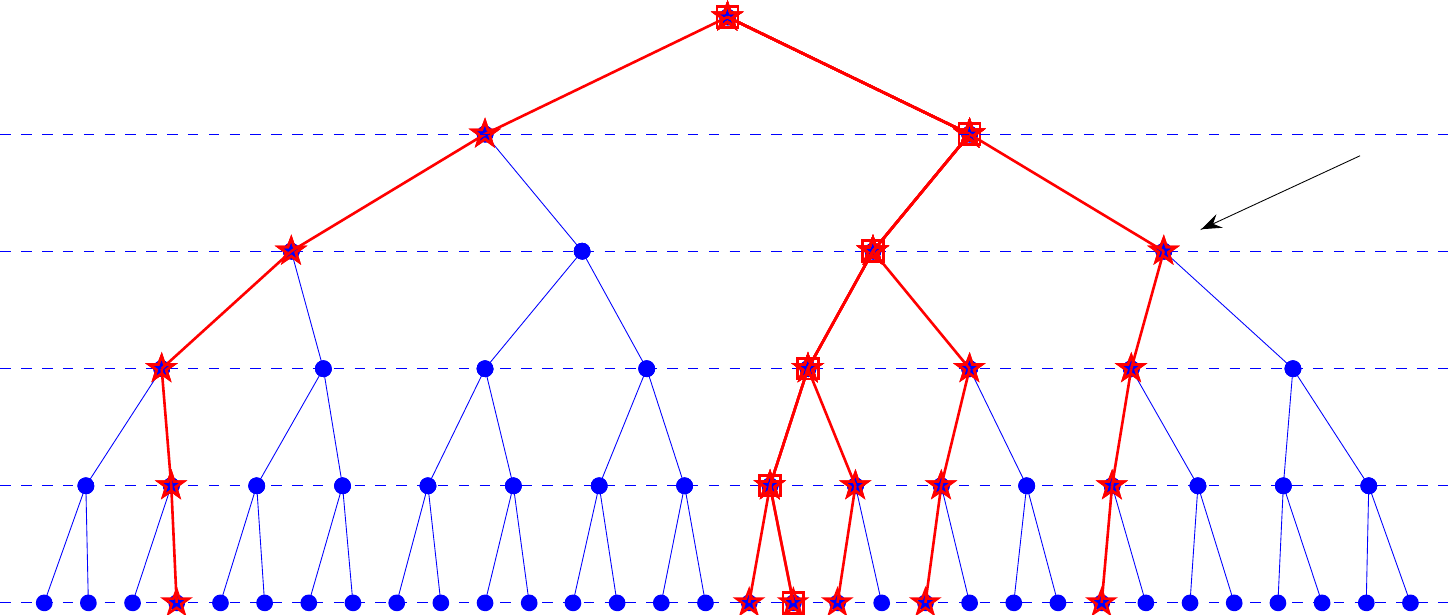}
\par\end{centering}

\caption{\label{fig:tree-decoding}Tree-search algorithm for a system with
$N=5$ and $K=5$ antennas. The transmitted bit vector is $\mathbf{x}=\left(1,-1,-1,-1,1\right)^{T}$.
Paths with stars are provided by the dominance conditions alone, while
the path with square nodes is the ML solution }
\end{figure}

We propose in this paper the inclusion of the dominance conditions
as an additional step in a generic tree-search algorithm for ML decoding.
For simplicity we restrict our attention to sphere decoding and we
show that at the expenses of a marginal increase of computational
complexity at each node, a significant reduction of the average number
of visited node can be achieved. By integrating the dominance conditions
into sphere decoding, we can exclude points that cannot be ML solution
before checking if they lie within the sphere. At the end of the tree-search
the partial metric computation carried on by the sphere decoder can
be used to select the optimal solution. We call this enhanced version
of the sphere decoder, \emph{king sphere decoder }(KSD).

We consider the formulation of a generic tree-search algorithm based
on the pseudo-code provided by Murugan \emph{et al.} in \cite{Murugan2006}
which describes a generic branch-and-bound algorithm. More generally
in a tree-search algorithm at each node a decision is made based on
a boolean condition that it is not necessarily expressed as a cost
function compared to a bounding function, but as combination of several
boolean conditions. 

\begin{algorithm}
\caption{\label{alg:gbb}Generic tree-search algorithm (adapted from \cite{Murugan2006})}
\texttt{%
\begin{algorithmic}
\STATE{reset\_tree()} \COMMENT{initialize the tree}
\STATE{init\_search()} \COMMENT{ex.: reset partial distance}
\STATE{init\_ACTIVE()} \COMMENT{Create an empty list of active nodes}
\STATE{cn = root} \COMMENT{current node (cn) is root}
\STATE{ //-- Main loop }
\WHILE{ cn is not empty }
\IF{ cn is not a leaf }
\IF{ cn is a valid node }
\STATE{ get valid child nodes of cn }
\STATE{ sort valid child nodes }
\STATE{ insert valid child nodes in ACTIVE }
\STATE{ update node counter }
\ENDIF
\ELSE
\STATE{ select best node }
\STATE{ update bounding function }
\ENDIF
\STATE{ get next node in ACTIVE }
\ENDWHILE
\STATE{ // }
\IF{ best node is empty }
\STATE{ restart with a reduced radius }
\ELSE
\STATE{ get the ML solution corresponding to the best node }
\ENDIF
\end{algorithmic}
}
\end{algorithm}

In the algorithm, \texttt{ACTIVE} contains an ordered set of nodes
to be visited. The data structure used to implement \texttt{ACTIVE}
determines the traverse strategy in the tree. In case of BFS a queue
data structure can be employed, while a stack is suitable for DFS.
The algorithm starts with the initialization of the radius, that can
also be infinite, as in DFS. The main loop visits each valid node
of the tree, starting from the root. At each node, unless a leaf is
reached, the (conditional) dominance is checked first. If it is satisfied
then one of the two child nodes can be excluded and will not be visited,
otherwise no action is taken. Then, for each child nodes that has
not been excluded, the partial distance is computed and compared against
to the current radius. At this point only nodes that lie within the
partial distance will be considered valid nodes. Therefore for each
node, in general, a sub-set of child nodes are valid node and will
be visited in the loop. Note that dominance conditions are applied
to the current node and partial distances are computed on its child
nodes only if they are not excluded by the previous check. If valid
nodes that are generated from the current node need to be sorted,
as for example in Schnorr-Euchner enumeration, a sort function is
called before nodes are inserted in \texttt{ACTIVE}. 

If the current visited node is a leaf then according to the metric,
that is cumulatively computed, then the best candidate can be chosen
and, depending on the tree traversing strategy, the radius may be
updated. If a BFS is employed then it may happen that no leaf nodes
are available at the end of the main loop (there is no best node)
and a new search must be performed with an increased radius.

Note that the only required modification with respect to the sphere
decoding algorithm is contained in the function that generates valid
child nodes.

Dominance conditions introduced in the king sphere decoder can be
seen as new set of constraints reducing the number of points to be
visited, and for which no partial distance needs to be computed, because
the new algorithm discards paths that surely cannot be local minima
and then cannot be the ML solution.

As for sphere decoder, the advantage of the KSD is the expected large
reduction of the number of the visited nodes and then of surviving
paths. In the best-case scenario, in every visited node the dominance
condition is satisfied, and then the algorithm returns a unique solution
that corresponds to the ML solution, and only $M$ nodes are visited,
regardless of the choice of radius. In general the number of visited
nodes is greater than $M$ because the condition in \eqref{eq:cond-suff-cond}
is not always satisfied. In the worst-case scenario no dominant bit
is found, and then no decrease in the number of visited nodes with
respect to the original tree-search algorithm is achieved. While the
added conditions might increase the computational complexity at each
node, the average number of visited nodes can only be decreased. Therefore
the algorithm can only perform better in terms of computational complexity
measured in terms of the average number of visited nodes at the expenses
of increased computation at each node. In practical implementations
this represents a good trade-off between speed, and then achievable
throughput, and area on VLSI devices.

The efficiency of the algorithm will depend on the structure of the
channel, i.e. on the matrix $\mathbf{H}$ and is higher in those cases
where the off-diagonals elements of the channel correlation matrix
are relatively small. This might be the case of some correlative MIMO
channel models that take into account correlation among transmit and
receive antennas or keyhole channels \cite{Chizhik2002}.

Note that the dominance conditions do not require any matrix inversion
or matrix factorization and can be employed unmodified both in underloaded
and overloaded systems.

\section{Simulation Results\label{sec:sim-results}}

The proposed algorithm has always optimal performances in terms of
SER, by construction. Performances are then measured in terms of the
average number of visited nodes. We have run Monte-Carlo simulations
in order to verify the improvement that can be gained with our proposed
algorithm as in the worst case scenario performances are the same
as those of SD. 

Simulation results are presented with reference to two typologies
of wireless channels, with different mathematical structures in their
channel matrices: (i) independent fading, where entries of the channel
matrix are assumed to be i.i.d. according to a zero-mean complex Gaussian
distribution with unit variance; (ii) correlated fading, where a Kronecker
model is assumed to take spatial correlation into account \cite{Gesbert2002}.
More specifically, in the case of correlated fading we assume that
the channel matrix follows the structure \cite{Zelst2002}
\begin{equation}
\mathbf{H}={\bf R}_{R}^{1/2}\mathbf{G}{\bf R}_{T}^{1/2},\label{eq:correlated fading}
\end{equation}
where $\mathbf{R}_{T}$ and $\mathbf{R}_{R}$ describe spatial correlation
at transmit and receive locations, respectively, and $\mathbf{G}$
matches the independent fading structure.

In Fig.~\ref{fig:avg-num-nodes} results from simulations are shown
for MIMO systems with different number of transmit and receive antennas.
Two MIMO channel models are considered. The first is the standard
MIMO channel model where the channel matrix elements are drawn from
a complex Gaussian distribution. The second model is the correlative
MIMO channel where correlation between transmit antennas and between
receive antennas as in \eqref{eq:correlated fading} where, according
the model proposed in \cite{Zelst2002}, we have
\begin{equation}
\mathbf{R}_{T}=\left(\begin{array}{ccccc}
1 & \rho_{T} & \rho_{T}^{4} & \cdots & \rho_{T}^{\left(K-1\right)^{2}}\\
\rho_{T} & 1 & \ddots & \ddots & \vdots\\
\rho_{T}^{4} & \rho_{T} & 1 & \ddots & \rho_{T}^{4}\\
\vdots & \ddots & \ddots & \ddots & \rho_{T}\\
\rho_{T}^{\left(K-1\right)^{2}} & \cdots & \rho_{T}^{4} & \rho_{T} & 1
\end{array}\right)
\end{equation}
and 
\begin{equation}
\mathbf{R}_{R}=\left(\begin{array}{ccccc}
1 & \rho_{R} & \rho_{R}^{4} & \cdots & \rho_{R}^{\left(N-1\right)^{2}}\\
\rho_{R} & 1 & \ddots & \ddots & \vdots\\
\rho_{R}^{4} & \rho_{R} & 1 & \ddots & \rho_{R}^{4}\\
\vdots & \ddots & \ddots & \ddots & \rho_{T}\\
\rho_{R}^{\left(N-1\right)^{2}} & \cdots & \rho_{R}^{4} & \rho_{R} & 1
\end{array}\right)
\end{equation}
with $\rho_{T}$ and $\rho_{R}$ transmit and receive correlation
indexes, respectively. Results are obtained for $\rho_{T}=0.5$ and
$\rho_{R}=0.5$ and both SD and KSD, with deep first (DF) search strategy,
in terms of the number of visited nodes averaged over the channel
and noise realizations as well as the possible transmitted vectors
\cite{Jalden2005}.

Figs.~\ref{fig:avg-num-nodes}, \ref{fig:avg-num-nodes-1} and \ref{fig:avg-num-nodes-2}
show that in practice dominance conditions can effectively reduce
the computational complexity of SD in all cases under consideration.
The reduction is greater with correlated MIMO systems, suggesting
that dominance conditions are more frequently satisfied in this case.

\noindent 
\begin{figure}
\noindent \begin{centering}
\includegraphics[width=1\columnwidth]{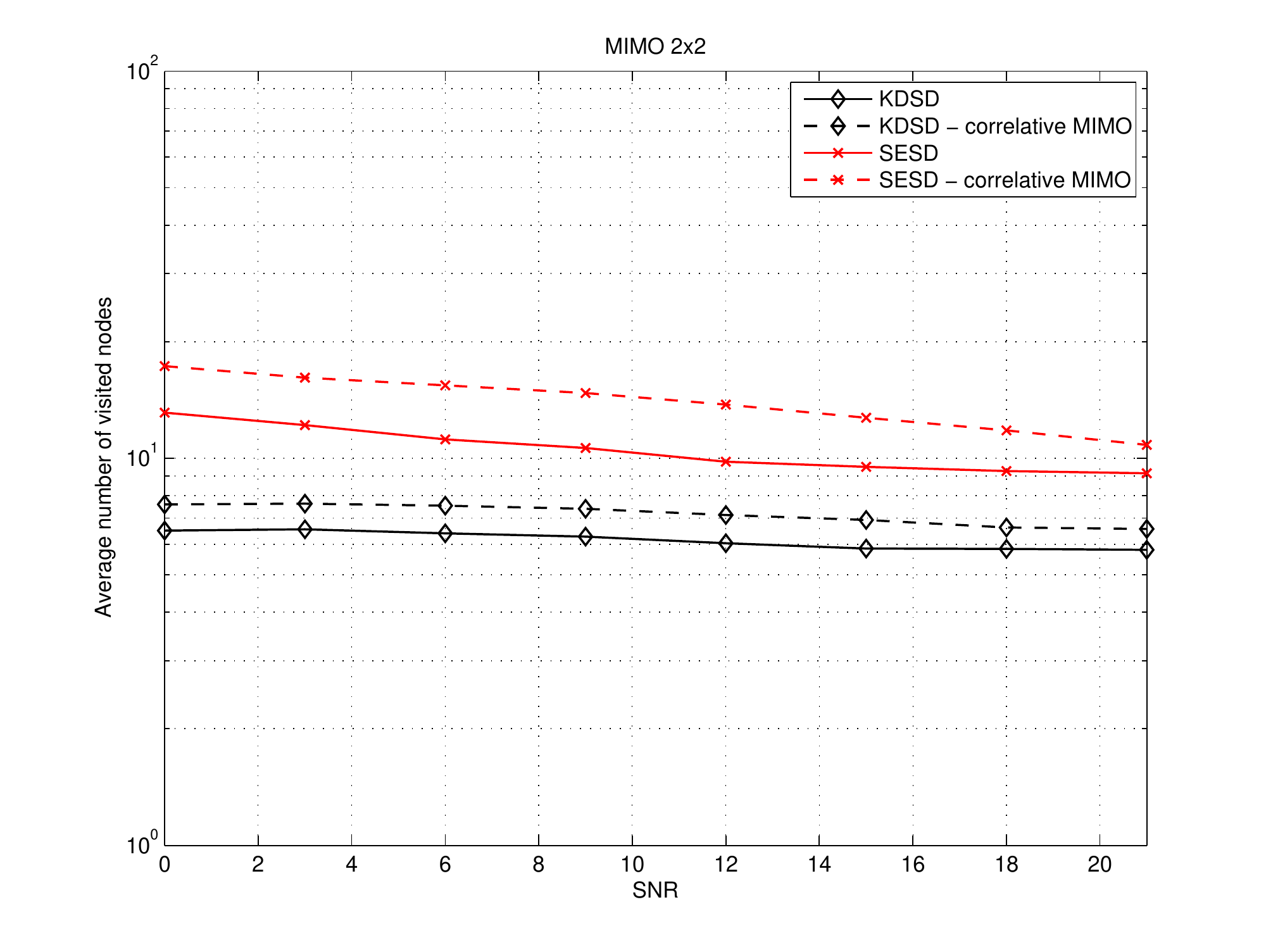}
\par\end{centering}

\caption{\label{fig:avg-num-nodes}Average number of visited nodes as function
of average signal-to-noise ratio. $4$-QAM system with $K=2$, $N=2$.}
\end{figure}

\noindent 
\begin{figure}
\noindent \begin{centering}
\includegraphics[width=1\columnwidth]{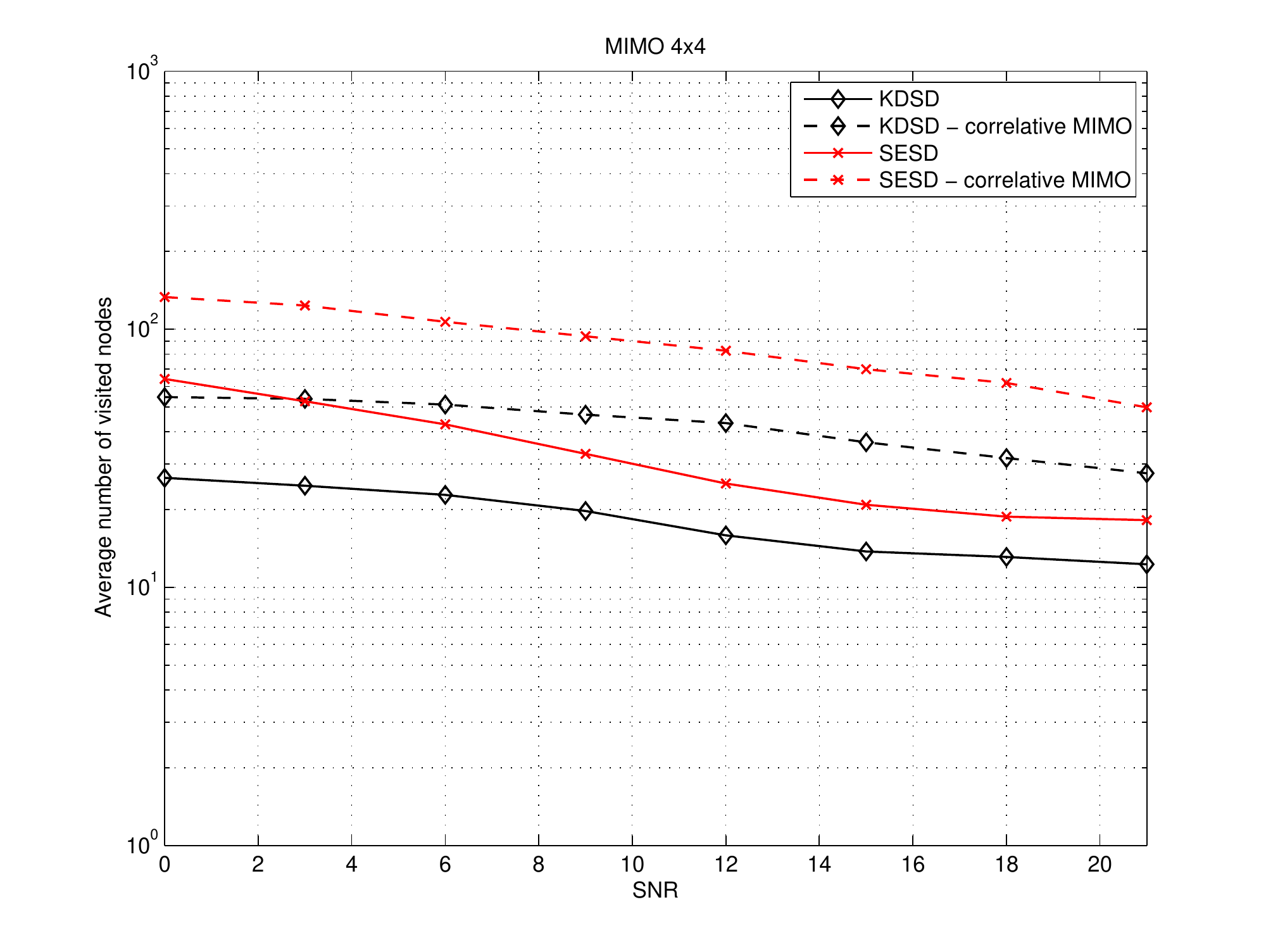}
\par\end{centering}

\caption{\label{fig:avg-num-nodes-1}Average number of visited nodes as function
of average signal-to-noise ratio. $4$-QAM system with $K=4$, $N=4$.}
\end{figure}

\noindent 
\begin{figure}
\noindent \begin{centering}
\includegraphics[width=1\columnwidth]{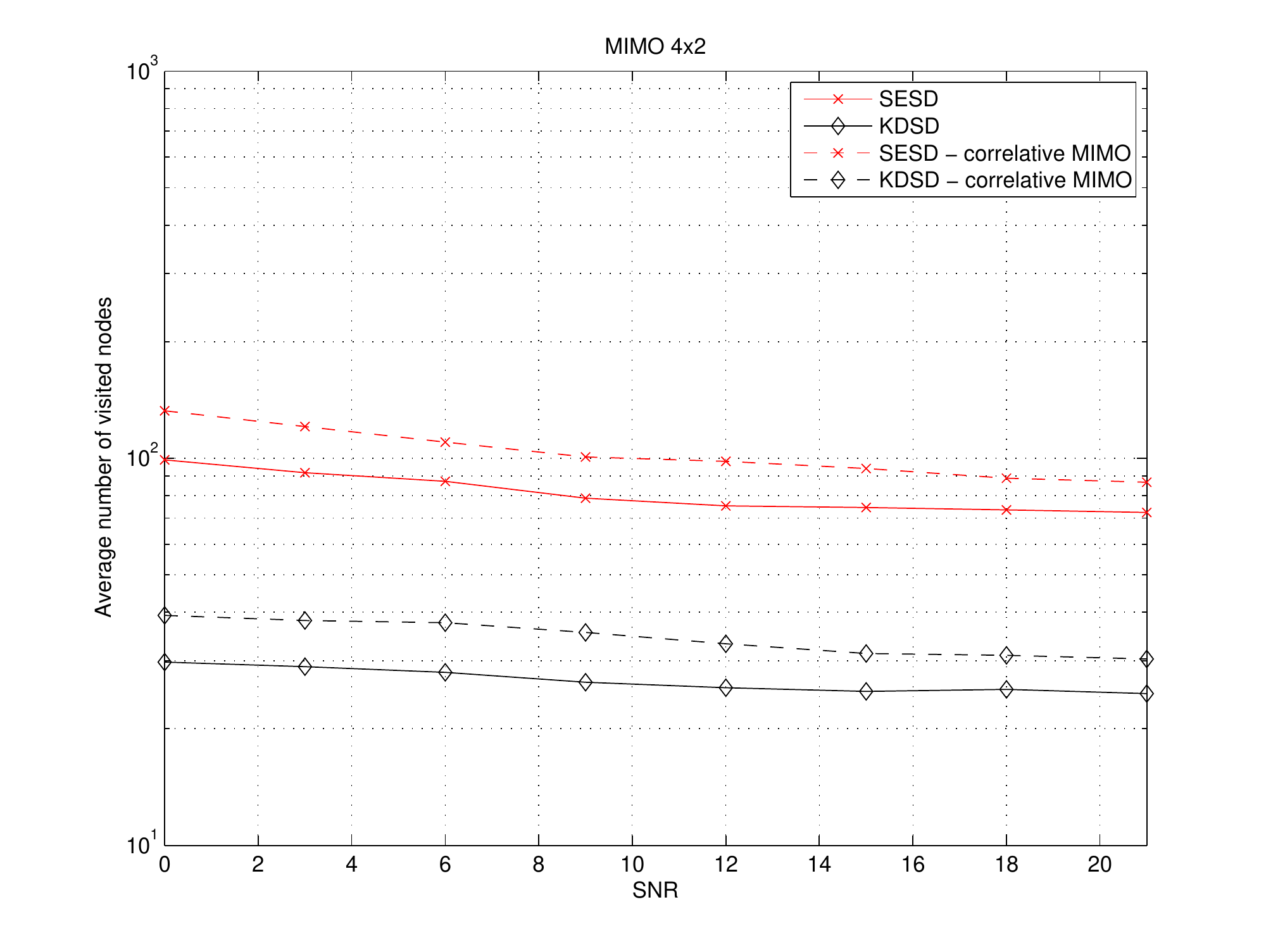}
\par\end{centering}

\caption{\label{fig:avg-num-nodes-2}Average number of visited nodes as function
of average signal-to-noise ratio. $16$-QAM system with $K=2$, $N=4$.}
\end{figure}

\section{Conclusions\label{sec:conclusions}}

We have proposed an enhanced version of the SD, namely KSD, that presents
a lower computational complexity measured in terms of average number
of visited nodes, w.r.t classic SD implementation. The reduction in
complexity is possible because an additional cost function is considered
in the standard tree-search based SD. The cost function is based on
the dominance conditions that allows to take a decision when multiantenna
interference is not too strong. Therefore the KSD has all the features
of any SD algorithm and has always better performances. Numerical
simulations show that for MIMO systems, both with independent and
correlated fading statistics, the dominance conditions effectively
reduce the computational complexity of the SD.

\section{\label{proof:k-ddiff}Proof of proposition \ref{pro:k-ddiff}}

We explicitly write the discrete difference as:
\begin{multline}
\Delta_{k}f\left(\mathbf{x};\hat{\mathbf{x}}\right)=\\
-2\Re\left\{ \left(x_{k}-\hat{x}_{k}\right)^{*}\mathbf{h}_{k}^{H}\mathbf{y}\right\} +\mathbf{x}^{H}\mathbf{H}^{H}\mathbf{H}\mathbf{x}-\hat{\mathbf{x}}^{H}\mathbf{H}^{H}\mathbf{H}\hat{\mathbf{x}}\label{eq:proof-ddiff-1}
\end{multline}
The term $\mathbf{x}^{H}\mathbf{H}^{H}\mathbf{H}\mathbf{x}-\hat{\mathbf{x}}^{H}\mathbf{H}^{H}\mathbf{H}\hat{\mathbf{x}}$
is a real scalar, so we can apply the conjugate-transpose operator
with no change to obtain
\begin{multline}
\mathbf{x}^{H}\mathbf{H}^{H}\mathbf{H}\mathbf{x}-\hat{\mathbf{x}}^{H}\mathbf{H}^{H}\mathbf{H}\hat{\mathbf{x}}=\\
\Re\left\{ \sum_{i}\sum_{j}x_{i}^{*}\mathbf{h}_{i}^{H}\mathbf{h}_{j}x_{j}-\sum_{m}\sum_{n}\hat{x}_{m}^{*}\mathbf{h}_{m}^{H}\mathbf{h}_{n}\hat{x}_{n}\right\} =\\
2\Re\left\{ \left(x_{k}-\hat{x}_{k}\right)^{*}\left[\mathbf{h}_{k}^{H}\mathbf{y}-\sum_{i\neq k}x_{i}\mathbf{h}_{k}^{H}\mathbf{h}_{i}\right]\right\} \\
+\left(\left|x_{k}\right|^{2}-\left|\hat{x}_{k}\right|^{2}\right)\mathbf{h}_{k}^{H}\mathbf{h}_{k}\label{eq:proof-ddiff-2}
\end{multline}
By substituting \eqref{eq:proof-ddiff-2} into \eqref{eq:proof-ddiff-1}
we can write the discrete difference as stated by the proposition.

\section{\label{proof:suff-cond}Proof of Proposition \ref{pro:suff-cond}}

The sign of the $k$th discrete difference for $\mathbf{x}$ (w.r.t.
its unique adjacent vector $\hat{\mathbf{x}}$ along the $k$th coordinate)
is determined by \eqref{eq:equil-condition}, reported in the following
for convenience:
\begin{equation}
x_{k}=\mathrm{sign}\left[\mathbf{h}_{k}^{T}\mathbf{y}-\sum_{i\neq k}x_{i}\mathbf{h}_{k}^{T}\mathbf{h}_{i}\right].\label{eq:equil-condition-Appendix}
\end{equation}
When the sufficient condition \eqref{eq:suff-cond} holds, the first
term in r.h.s. of \eqref{eq:equil-condition-Appendix} is dominant
over the sum representing the second term, independently on $x_{i}$,
$i\neq k$. In such a case \eqref{eq:equil-condition-Appendix} reduces
to 
\[
x_{k}=\mathrm{sign}\left[\mathbf{h}_{k}^{T}\mathbf{y}\right],
\]
that is the $k$th discrete difference depends only on $x_{k}$, as
stated by the proposition.

\bibliographystyle{IEEEtran}
\bibliography{IEEEabrv,myabbrv,king-decoder-journal,king-decoderMPSK}

\end{document}